\theoremstyle{definition}
\newtheorem{theorem}{Theorem}
\newtheorem{lemma}{Lemma}
\newcommand{\MM}{\textsc{Maximum Matching} }
\newcommand{\pimin}{\pi_{\text{min}}}
\newcommand{\Ymin}{Y_1}
\title{Limitation of Quantum Walk Approach to the Maximum Matching Problem}
\author{Alcides~Gomes {Andrade J\'unior} and Akira Matsubayashi\footnote{%
The authors are with
 the Division of Electrical Engineering and Computer Science,
 Kanazawa University, Kanazawa, \mbox{920--1192} Japan.}
}
\date{}
\begin{document}

\maketitle

\begin{abstract}
The \MM problem has a quantum query complexity lower bound of
 $\Omega(n^{3/2})$ for graphs on $n$ vertices
 represented by an adjacency matrix.
The current best quantum algorithm has the query complexity $O(n^{7/4})$,
 which is an improvement over the trivial bound $O(n^2)$.
Constructing a quantum algorithm for this problem with a query complexity
 improving the upper bound $O(n^{7/4})$ is an open problem.
The quantum walk technique is a general framework for constructing
 quantum algorithms by transforming a classical random walk search
 into a quantum search,
 and has been successfully applied to constructing
 an algorithm with a tight query complexity for another problem.
In this work we show that the quantum walk technique fails to produce
 a fast algorithm improving the known (or even the trivial) upper bound
 on the query complexity.
Specifically, if a quantum walk algorithm
 designed with the known technique
 solves the \MM problem
 using $O(n^{2-\epsilon})$ queries with any constant $\epsilon>0$,
 and if the underlying classical random walk is independent of an input graph,
 then
 the guaranteed time complexity is larger than any polynomial of $n$.
\end{abstract}

\paragraph{Keywords}
quantum algorithm, query complexity, random walk, Markov chain, hitting time

\section{Introduction}
A matching of an undirected graph $G$ is defined as a subset of the edges
 of $G$, where no two edges share a vertex.
The \MM problem is to find a matching of a given graph $G$
 with the maximum possible number of edges.
This computational problem is one of the fundamental problems in graph theory
 and has many applications.
In this work we consider the time and query complexities
 of quantum algorithms to solve the \MM problem.

The \MM problem is a well studied problem whose first polynomial time algorithm
 was devised
 by Edmonds \cite{Edmonds1965}.
Subsequently, Hopcroft and Karp \cite{Hopcroft1973} presented
 an $O(\sqrt{n}m)$ time algorithm for bipartite graphs on $n$ vertices and
 $m$ edges.
This algorithm was then generalized to arbitrary graphs by
 Micali and Vazirani \cite{Micali1980}.
The $O(\sqrt{n}m)$ time complexity
 is met by another algorithm
 of Gabow \cite{Gabow2017}.
Another result to note is the randomized algorithm in \cite{Mucha2004},
 which exploits algebraic properties of the matching problem, and is able
 to achieve a run time of $O(n^{\omega})$, where $\omega$ is the exponent
 associated with the best known matrix multiplication algorithm.
The current best upper bound on $\omega$ is
 $2.371339$ \cite{ADVXXZ2025}.
For bipartite graphs, the time complexity
 was recently improved
 to $m^{1+o(1)}$ by Chen et al.\ \cite{Chen2022}, who presented
 an $m^{1+o(1)}$ time randomized algorithm for the maximum flows and
 the minimum-cost flows on directed graphs,
 to which the \MM problem on bipartite graphs can be reduced.

Concerning quantum algorithms for the \MM problem,
 D\"orn \cite{Dorn2009} presented an algorithm with time complexities
 $O(n^2\log^2 n)$ in the adjacency matrix model and $O(n\sqrt m\log^2 n)$
 in the adjacency list model.
This algorithm is obtained from the algorithm of Micali and Vazirani
 by transforming search procedures into quantum search procedures.
The technique applied to the transformation is Grover's algorithm,
 or its generalization, called amplitude amplification
 \cite{Grover1997,Brassard2002}.
This technique can be used to transform classical search algorithms into
 quantum search algorithms while obtaining a quadratic speedup.

Efficiency of an algorithm is measured by the query complexity as well as
 the time complexity.
The query complexity is a complexity measure that measures how much
 of the input an algorithm needs to access in order to produce its output.
More specifically, an algorithm is formulated as a model where accesses
 to the input, represented by a bit string, are made through
 a black-box function, which receives an argument $i$ and returns
 the $i$-th bit of the input.
The query complexity is then defined as the number of times that the
 algorithm makes calls to this black-box function.
The query complexity measure is useful in the study of quantum algorithms
 due to the fact that we have techniques \cite{Ambainis2002,Hoyer2007}
 that we can use to prove lower bounds for it.
This ability to set lower bounds permits us to set limits on how efficiently
 a problem can be solved in quantum computers (since the query complexity
 is always less than the time complexity), and thus how much of an advantage a
 quantum algorithm can possibly provide to a particular problem.

For the \MM problem, a lower bound of the quantum query complexity was
 established in \cite{Berzina2003,Zhang2004} to be $\Omega(n^{3/2})$
 in the adjacency matrix model.
For bipartite graphs, a quantum algorithm with a nearly optimal
 query complexity $O(n^{3/2}\log^2 n)$
 in the adjacency matrix model
 was
 achieved by Blikstad et al.\ \cite{Blikstad2022}.
This algorithm is obtained by designing
 a classical algorithm for the \MM problem on
 bipartite graphs with a nearly optimal number of OR-queries, which ask
 if a given set of pairs of vertices has
 at least one pair of adjacent vertices,
 and by transforming the OR-queries to quantum queries using
 Grover's algorithm.
The current best quantum algorithms for general graphs,
 proposed by Kimmel and Witter \cite{Kimmel2021},
 have
 query complexities $O(n^{7/4})$ in the adjacency matrix model
 and $O(n^{3/4}\sqrt{m+n})$ in the adjacency list model.
In the adjacency matrix model, the upper bound $O(n^{7/4})$
 is an improvement over the trivial bound $O(n^2)$.
These algorithms are obtained by transforming Gabow's algorithm
 into quantum versions using the technique of
 guessing decision trees \cite{Beigi2020},
 which transforms queries made by a classical algorithm, together with
 a guessing algorithm that predicts the results of the queries,
 into a reduced number of quantum queries.
Constructing a quantum algorithm to solve the \MM problem
 for general graphs
 with the query complexity
 improving $O(n^{7/4})$
 is an open problem.

Constructing quantum algorithms is a process that currently is still
 very complicated, and is mostly done through the use of techniques
 that can transform existing classical algorithms into quantum algorithms.
Examples of such techniques are
 Grover's algorithm (amplitude amplification) \cite{Grover1997,Brassard2002}
 and
 guessing decision trees \cite{Lin2016,Beigi2020},
 as applied to the \MM algorithms
 in \cite{Dorn2009,Blikstad2022} and
 in \cite{Lin2016,Beigi2020,Kimmel2021}, respectively.
Yet another technique that can be used to construct quantum algorithms is
 the quantum walk.
Given a Markov chain $P$ on a state space $X$ and an indicator function
 $\chi: X \to \{0,1\}$, we can construct a random walk algorithm
 to search for an element $x\in X$ with $\chi(x)=1$, called a marked element,
 by
 simulating the transitions of $P$ in $X$ until we reach a marked element.
The expected run time of the algorithm is proportional to the expected hitting
 time $\tau$ of the set $Y$ of marked elements, i.e.,
 the expected number of transitions to reach an element of $Y$
 for the first time.
The quantum walk technique is transformation of the classical random walk
 algorithm into a quantum search and able to provide a speedup
 of finding a marked element in expected time of order $\sqrt{\tau}$.
This technique was first established by Szegedy \cite{Szegedy2004} for
 the spacial case of $Y$ consisting of a single marked element.
Szegedy's work was generalized to any number of marked elements
 by Ambainis et al.\ \cite{Ambainis2020}.
Example applications of the quantum walk technique include
 an algorithm in \cite{Ambainis2007} that solves
 the \textsc{Element Distinctness}\footnote{%
Given as input a list of numbers $x_1, \dots, x_n$,
 the \textsc{Element Distinctness} problem involves
 determining whether or not there exist two distinct elements $x_i$ and $x_j$
 such that $x_i = x_j$.}
 problem using $O(n^{2/3})$ queries matching this problem's lower bound,
 and an algorithm in \cite{Magniez2007} that solves the
 \textsc{Triangle}\footnote{%
The \textsc{Triangle} problem consists of finding a triangle in an input graph.}
 problem using $\tilde{O}(n^{13/10})$ queries.

In this work we show that the quantum walk technique described above
 fails to produce a fast algorithm for the \MM problem
 improving the known (or even the trivial) upper bound on the query complexity.
Specifically, we prove that if a quantum walk algorithm
 for the \MM problem, designed using the known technique,
 has a query complexity of $O(n^{2-\epsilon})$
 in both the adjacency matrix and list models,
 where $\epsilon$ is any positive constant,
 and if the underlying Markov chain is independent of the edges of
 an input graph,
 then
 there exists an input graph with $\Theta(n^2)$ edges such that
 the algorithm needs an expected run time larger than any polynomial of $n$.
Moreover, we prove the existence of such an input graph that is bipartite.

We note that our assumption on the independence of the Markov chain
 specifically means that for each $n$, a common Markov chain
 (a common state set $X$ and a common transition matrix on $X$)
 is used for all graphs with $n$ vertices,
 regardless of their edge set.
We believe that this assumption is valid to a certain extent under the
 restricted, $O(n^{2-\epsilon})$, query complexity.
This is because, for an input graph with $\Theta(n^2)$ edges,
 no algorithm with the restricted query complexity can use
 the entire information of edges, and therefore,
 must setup a Markov chain (its state set and transition matrix)
 based only on a small part of edges, i.e., $O(n^{2-\epsilon})$ edges.
Although an algorithm may setup a Markov chain using this small part of edges,
 our assumption is viewed as an extreme case in the sense that
 the ratio of edges queried for setting up a Markov chain
 to the $\Theta(n^2)$ edges tends to $0$ as $n\rightarrow\infty$.
If the input graph is represented by an adjacency matrix, then
 because the adjacency matrix has the size of $n^2$ for any number of edges,
 the above discussion applies even to graphs with $o(n^2)$ edges.

We also note that our result shows a limitation of a quantum walk algorithm
 designed simply using the technique of \cite{Ambainis2020} adopted to
 a classical random walk.
There remain possibilities to overcome the presented limitation
 by sophisticated algorithms, such as quantum walks more adaptive to input graphs
 and/or combined with other techniques.

After describing some definitions in Sect.~\ref{sc:preliminaries},
 we prove our result on general graphs in Sect.~\ref{sc:limitation_general}.
In Sect.~\ref{sc:limitation_bipartite},
 we modify the proof in
 Sect.~\ref{sc:limitation_general} to our result on bipartite graphs.
We conclude the paper in Sect.~\ref{sc:conclusion}.

\section{Preliminaries}
\label{sc:preliminaries}
\subsection{Matchings}

A \emph{matching} of an undirected graph $G$ is a subset $M$ of edges in $G$
 such that
 no two edges in $M$ are adjacent, i.e., incident to a common vertex.
The matching $M$ is said to be \emph{maximum}
 if $G$ has no matching that contains more edges than $M$.
The problem of computing a maximum matching of a given graph is denoted by
 the \MM problem.

A matching $M$ of a graph $G$ is said to be \emph{perfect} if
 every vertex of $G$ is incident to an edge in $M$.
By definition any perfect matching is maximum.
We use in our proof
 two simple facts on the number of perfect matchings of a graph,
 Lemmas \ref{lemma:perfect-matching-count}
 and~\ref{lemma:subgraph-perfect-matching-count} below.
\begin{lemma}[\cite{Zaks1971}]
\label{lemma:perfect-matching-count}
Let $\Phi(n)$ denote the total number of distinct perfect matchings on
 a complete graph of $2n$ vertices.
Then $\Phi(n)=(2n-1)!!=(2n)!/(2^nn!)$.
\end{lemma}
\begin{lemma} \label{lemma:subgraph-perfect-matching-count}
A graph of $2n$ vertices and $m \geq n$ edges contains at most $m^n/n!$
 distinct perfect matchings.
\end{lemma}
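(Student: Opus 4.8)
The plan is to reduce the count of perfect matchings to a count of edge subsets, and then apply an elementary binomial estimate. First I would observe that any perfect matching of a graph on $2n$ vertices must consist of exactly $n$ edges: the matching edges are pairwise non-adjacent, so they cover pairwise disjoint pairs of vertices, and since together they must cover all $2n$ vertices, the number of matching edges is exactly $2n/2 = n$. Hence every perfect matching is, in particular, an $n$-element subset of the edge set, which has size $m$.

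Next I would note that distinct perfect matchings correspond to distinct $n$-element edge subsets. This is immediate because a matching is by definition a set of edges; two matchings that coincide as sets of edges are the same matching. Consequently the assignment sending each perfect matching to its underlying set of $n$ edges is injective, and the number of perfect matchings is bounded above by the total number of $n$-element subsets of an $m$-element set, namely $\binom{m}{n}$.

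Finally I would conclude with the one-line estimate
\[
\binom{m}{n} = \frac{m(m-1)\cdots(m-n+1)}{n!} \le \frac{m^n}{n!},
\]
where the inequality holds because each of the $n$ factors in the numerator is at most $m$. This gives the claimed bound of $m^n/n!$.

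I do not anticipate a genuine obstacle here: the argument is a direct injection from perfect matchings into $n$-subsets of edges, followed by a trivial factor-by-factor bound. The only points worth stating explicitly are that the matching has exactly $n$ edges (so that the subset has the correct size) and that the hypothesis $m \ge n$ ensures $\binom{m}{n}$ is a well-defined count; the final inequality $\binom{m}{n} \le m^n/n!$ itself holds without any constraint relating $m$ and $n$.
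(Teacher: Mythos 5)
Your argument is correct and is essentially the paper's own proof: both bound the number of perfect matchings by the number of $n$-element subsets of the $m$ edges, giving $\binom{m}{n}\leq m^n/n!$. You merely spell out the injectivity and the exactly-$n$-edges observation that the paper leaves implicit.
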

\begin{proof}
A perfect matching contains $n$ edges chosen from $m$ possible edges.
Therefore, a loose upper bound on the number of perfect matchings is
 $\binom{m}{n}\leq m^n/n!$.
\end{proof}

\subsection{Query Complexity}
Let $A$ be an algorithm whose input space is $\{0,1\}^n$, for some
 positive integer $n$.
For each $x \in \{0,1\}^n$, let $Q_x$ be a black-box function,
 which receives an argument $i$ with $0\leq i\leq n-1$ and returns
 the $i$-th bit $x_i$ of $x$.
We consider a computational model where
 $A$ only has access to a given input $x\in\{0,1\}^n$ through
 the black-box function $Q_x$.
That is, every time $A$ needs to read a bit from $x$,
 it makes a call to
 $Q_x$ with the desired bit index and receives the corresponding bit
 returned by $Q_x$.
Each access to a bit of $x$ made through $Q_x$ is called a \emph{query}.
Denote by $q(x)$ the maximum number of queries that $A$ makes to compute
 the output for input $x$.
The \emph{query complexity}
 of $A$ is defined as the value $\max_{x \in \{0,1\}^n} q(x)$.

For a quantum algorithm, the equivalent
 formulation of a query consists of interpreting $Q_x$ as a unitary
 transformation whose action on $\ket{i}\ket{0}$ is defined
 as $Q_x\ket{i}\ket{0} = \ket{i}\ket{x_i}$.
If this transformation is called with the input $\ket{i}\ket{0}$,
 then it outputs the state $\ket{i}\ket{x_i}$,
 which contains the $i$-th bit of $x$.
The number of calls to $Q_x$ in the quantum algorithm determines
 the number of queries performed by the algorithm,
 just as defined in the classical case above.

\subsection{Markov Chains}
Let $X$ be a finite set of states, and let $\{S_n\}_{n=0}^{\infty}$ be a series
of random variables assuming values in $X$. 
The variable $S_n$ determines the state of a stochastic process at
 the $n$-th point in time (time here is discrete).
We consider the probability
 $\Pr\{S_{n} = a \mid S_{n-1} = b_{n-1}, S_{n-2} = b_{n-2}, \dots, S_0 = b_0 \}$,
 i.e., the probability of the variable $S_n=a$,
 assuming that $S_{i} = b_i$, for $i < n$.
The stochastic process involving the
variables $S_n$ is called a \emph{time-invariant Markov chain} provided
\[
\begin{split}
&\Pr\{S_{n} =a \mid S_{n-1} = b_{n-1}, S_{n-2} = b_{n-2}, \dots, S_0 = b_0\}\\
         =&\Pr\{S_{n} = a \mid S_{n-1} = b_{n-1}\}\\
         =&\Pr\{S_1 = a \mid S_0 = b_{n-1}\}.
\end{split}
\]
The time-invariant Markov chain can be represented by a matrix $P$ defined as
 $P_{ab} = \Pr\{S_1 = b \mid S_0 = a\}$.

Let $\pi$ be a probability distribution over the elements of $X$.
We interpret $\pi$ as a row vector, where the $x$-th component,
 denoted by $\pi_x$, is the probability of sampling $x$ from $\pi$.
The distribution $\pi$ is said to be \emph{stationary}
 if $\pi = \pi P$.
By this definition $\pi=\pi P^n$ for any $n>0$.
A Markov chain is said to be \emph{irreducible} if, for any $a,b\in X$,
 there exists some $n > 0$ such that $P^{n}_{ab} > 0$.
It is known that if a Markov chain
 (with a finite state set $X$ as introduced here)
 is irreducible, then there exists a unique
 stationary distribution $\pi$, and $\pi_x>0$ for all $x\in X$.
A Markov chain is said to be \emph{aperiodic} if, for any $x\in X$,
 the greatest common divider of all numbers $n$, such that $P^n_{xx}>0$, is $1$.
A Markov chain (with a finite state set)
 is said to be \emph{ergodic} if it is irreducible and aperiodic.
It is known that if a Markov chain is ergodic, then
 $\lim_{n\rightarrow\infty}P^n_{ab}=\pi_b$ for any $a,b\in X$.
A Markov chain is said to be \emph{reversible} if
 there exists a distribution $\pi$ such that $\pi_aP_{ab} = P_{ba}\pi_b$
 for all $a,b \in X$.
The distribution $\pi$ satisfying the reversibility condition is stationary.

Let $Y$ be a subset of the state space $X$.
The \textit{hitting time} of $Y$ is the random variable of
 the number of transitions to start from the first state,
 chosen according to an initial distribution,
 and to reach
 an element of $Y$ for the first time.

\section{Limitation of Quantum Walk Approach for General Graphs}
\label{sc:limitation_general}
\begin{algorithm}[t]
\caption{Random Walk Search}
\label{alg:random-walk}
\begin{algorithmic}[1]
    \Procedure{Random-Walk-Search}{$P$, $\chi$}
        \State Let $\pi$ be the stationary distribution of $P$
        \State Sample an initial state $x \in X$ according to $\pi$
        \While{$\chi(x) \neq 1$}
            \State
\begin{tabular}[t]{@{}l@{}}
 Let $P_x$ be the distribution for the transitions\\
 from $x$, i.e., the $x$-th row of $P$
\end{tabular}
            \State Sample $y$ according to $P_x$
            \State Set $x = y$
        \EndWhile
        \State \Return $x$
    \EndProcedure
\end{algorithmic}
\end{algorithm}

Let $P$ be (the matrix representation of)
 an ergodic Markov chain on a finite state space $X$.
Suppose we want to perform a search for an element $x$
 of $X$ satisfying $\chi(x) = 1$, where
 $\chi: X \to \{0,1\}$. A random walk search algorithm making use of $P$
 is given in Algorithm \ref{alg:random-walk}.
Basically, the algorithm chooses an initial state according to
 the stationary distribution,
 and simulates transitions of $P$ through the states
 of $X$ until it finds one state satisfying the search condition.
The expected
time until a target state is reached is given by the expected hitting time
of the set $\{x \in X \mid \chi(x) = 1\}$.
To determine the overall
 (time or query) complexity cost of the algorithm we must take into account the
cost of the operations:
\begin{enumerate}
\item (Setup)
The cost of setting up the stationary distribution $\pi$ of $P$,
 and sampling the initial state from $\pi$;
\item (Transition)
The cost of sampling from the distribution $P_x$,
 determined by the $x$-th row of $P$;
\item (Check)
 The cost of computing the function $\chi$.
\end{enumerate}

Suppose the costs needed for the Setup, Transition and Check
 operations are $S$, $T$ and $C$, respectively.
If $\tau$ is the expected hitting time of
 the set $\{x \in X \mid \chi(x) = 1\}$, then the expected total cost
 needed until a target element is found is given by $S + \tau(T + C)$.

In \cite{Ambainis2020}, for an ergodic and reversible Markov chain $P$,
 a perturbed Markov chain $P(s)$ with a parameter
 $s\in [0,1)$ is introduced, and then the quantum version of $P(s)$ is
 designed and analyzed.
The conclusion of \cite{Ambainis2020} is that the quantum version
 only needs a cost of $\tilde O(S+\sqrt{\tau}(T+C))$, where
 the parameters $S$, $T$, $C$, and $\tau$ are of $P$ (not of $P(s)$).
This provides a black-box strategy for obtaining quantum speedups from
 classical random walk algorithms, but is, unfortunately,
 not enough to produce a fast algorithm to find a maximum matching of
 an $n$-vertex graph using $O(n^{2-\epsilon})$ queries, as we show next.

Suppose a quantum walk algorithm for the \MM problem
 is transformed from a Markov chain that is
 independent of the edges of an input graph.
Through its execution, the algorithm executes the Setup, Transition and
 Check steps, as described in Algorithm~\ref{alg:random-walk}.
For each state $x$ of the Markov chain,
 let $\xi(x)$ denote the set of edges queried
 to compute $\chi(x)$ after the Setup step when the initial state is $x$.
A crucial point to understand is that, whenever the Check operation is
 executed in the state $x$, it must output the same result ($\chi(x)=1$ or $0$),
 independent of how many transitions have been performed before reaching $x$.
So the result of the Check step on the state $x$ must depend only
 on the set $\xi(x)$,
 because if $x$ were the initial state, then
 $\xi(x)$ would be the only information available about the input graph.
If the number of queries is limited to $O(n^{2-\epsilon})$, then
 $O(n^{2-\epsilon})$ queries are performed after the Setup step, and thus
 we see that $|\xi(x)| = O(n^{2-\epsilon})$.

The next theorem shows that this constraint forces the quantum walk
 algorithm to perform an excessively large number of
 transitions in order to find a perfect matching of a certain graph.
In the proof of the theorem,
 we first consider the situation that
 a complete graph of $2n$ vertices is input to the algorithm.
This means that
 the underlying Markov chain can be used to search for a state
 associated with a perfect matching on the complete graph.
We will show that there exists a perfect matching such that
 a super-polynomial expected number of transitions are needed
 in order to reach a state associated with the matching.
We then consider a graph having $2n$ vertices, $n^2$ edges, and
 this matching, as its unique perfect matching, is input to the algorithm.
Because the Markov chain is independent of the edges,
 it takes the same time to find the matching in this graph
 as in the complete graph, and thus
 a quantum quadratic speedup is not enough to achieve a polynomial time
 complexity.

\begin{theorem}
\label{th:limitation_general}
Suppose that a quantum walk algorithm for the \MM problem
 is transformed from
 a random walk search in Algorithm~\ref{alg:random-walk}, and that
 the underlying Markov chain is independent of the edges of an input graph.
If the query complexity of the quantum walk algorithm is $cn^{2-\epsilon}$
 in both the adjacency matrix and list models,
 where $n$ is the number of the vertices, and
 $c>0$ and $\epsilon>0$ are any constants,
 then
 there exists an input graph with $\Theta(n^2)$ edges
 such that the algorithm
 needs an expected run time larger than any polynomial of $n$.
\end{theorem}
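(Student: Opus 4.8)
The plan is to reduce the claimed lower bound on the running time to a lower bound on the classical hitting time $\tau$ of the marked set, and then to bound $\tau$ from below by the reciprocal of the stationary mass of that set. Since the quantum walk of \cite{Ambainis2020} performs $\Theta(\sqrt\tau)$ transitions, its running time is $\Omega(\sqrt\tau)$, so it suffices to exhibit a graph for which $\tau$ is larger than any polynomial. For the fixed ergodic chain started from its stationary distribution $\pi$, every $S_s$ is distributed as $\pi$, so a union bound gives $\Pr[\text{hit } Y \text{ within } t \text{ steps}] \le (t+1)\,\pi(Y)$; hence the expected hitting time of a set $Y$ satisfies $\tau_Y = \Omega(1/\pi(Y))$. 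Thus the whole argument comes down to producing an input graph whose set of marked states has super-polynomially small $\pi$-mass.

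To find such a graph I would first feed the complete graph $K_{2n}$ to the algorithm and use it as a bookkeeping device. By the discussion preceding the theorem, the value of the Check step at a state $x$ depends only on the queried edge set $\xi(x)$, and $|\xi(x)| \le q := c\,n^{2-\epsilon}$. Consequently a state $x$ can certify a perfect matching $M$ only if $M \subseteq \xi(x)$, so by Lemma~\ref{lemma:subgraph-perfect-matching-count} the number of perfect matchings contained in a single $\xi(x)$ is at most $q^n/n!$ (note $q \ge n$ for large $n$, so the lemma applies). For each perfect matching $M$ of $K_{2n}$ let $Y_M=\{x : M\subseteq\xi(x)\}$. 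Weighting the indicator $[M \subseteq \xi(x)]$ by $\pi$ and summing over $M$ gives
\[
\sum_{M}\pi(Y_M)=\sum_x \pi_x\,\bigl|\{M:M\subseteq\xi(x)\}\bigr|\le \frac{q^n}{n!}.
\]
By Lemma~\ref{lemma:perfect-matching-count} there are $\Phi(n)=(2n)!/(2^n n!)$ perfect matchings, so some matching $M^\ast$ has $\pi(Y_{M^\ast})\le \dfrac{q^n/n!}{\Phi(n)}=\dfrac{2^n q^n}{(2n)!}$. With $q=cn^{2-\epsilon}$ a Stirling estimate shows this is at most $(ce^2/2)^n\,n^{-\epsilon n}$, which decays faster than any inverse polynomial.

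It remains to turn $M^\ast$ into an actual dense input. I would take a graph $G^\ast$ on the same $2n$ vertices whose unique perfect matching is $M^\ast$ and which has $\Theta(n^2)$ edges (a graph on $2n$ vertices with a unique perfect matching can have order $n^2$ edges, and any matching can be completed to such a graph). Because $G^\ast$ has $M^\ast$ as its only perfect matching, the subgraph $\xi(x)\cap E(G^\ast)$ contains a perfect matching if and only if $M^\ast\subseteq\xi(x)$; hence the marked set of $G^\ast$ is exactly $Y_{M^\ast}$. Since the chain, and therefore $\pi$, is independent of the edges, the hitting time of this marked set on input $G^\ast$ equals the quantity bounded above, namely $\tau=\Omega(1/\pi(Y_{M^\ast}))\ge (2/(ce^2))^n n^{\epsilon n}$. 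The quantum walk then needs $\Omega(\sqrt\tau)$ transitions, still larger than any polynomial, proving the theorem; the same counting works verbatim in the adjacency list model, since certifying that $M^\ast$ is perfect still forces all $n$ of its edges to be queried, so $|\xi(x)|\le q$ again bounds the certificates.

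The step I expect to be the main obstacle is making the certificate bound $|\{M : M\subseteq\xi(x)\}| \le q^n/n!$ and the equivalence \emph{``$x$ is marked for $G^\ast$ iff $M^\ast\subseteq\xi(x)$''} rigorous in the quantum and adaptive setting. The clean statement uses that $\chi(x)$ depends only on $\xi(x)$; but when queries are adaptive the set $\xi(x)$ may depend on the input, and for a quantum Check there is no literal decision tree, so I would need to argue carefully---via the reduction already sketched in the paragraph before the theorem, that the Check at an initial state $x$ accesses the input only through the $O(n^{2-\epsilon})$ edges it queries---that the number of perfect matchings consistent with an accepting execution at $x$ is still governed by a queried set of size $q$. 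A secondary technical point is verifying that the unique-perfect-matching graph $G^\ast$ can be chosen with $\Theta(n^2)$ edges for the particular $M^\ast$ produced by the averaging argument, which follows by relabelling vertices so that $M^\ast=\{\{2i-1,2i\}\}_{i=1}^n$ and using a standard triangular edge set.
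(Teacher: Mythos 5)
Your proposal is correct and follows essentially the same route as the paper's own proof: the averaging argument over all $\Phi(n)=(2n)!/(2^nn!)$ perfect matchings of $K_{2n}$ combined with the per-state certificate bound $q^n/n!$ from Lemma~\ref{lemma:subgraph-perfect-matching-count}, the hitting-time lower bound $\tau=\Omega(1/\pi(Y))$ from stationarity of the walk at every step, and the final passage to a $\Theta(n^2)$-edge graph with a unique perfect matching. The only differences are cosmetic (you write $q=cn^{2-\epsilon}$ where the paper tracks $c(2n)^{2-\epsilon}$, affecting only constants in the Stirling estimate), and the obstacle you flag about the Check step depending only on $\xi(x)$ is resolved exactly as you suspect, by the discussion preceding the theorem.
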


\begin{proof}
Let $P$ be the underlying Markov chain
 of the quantum walk algorithm.
Since $P$ is ergodic and
 has a finite state $X$,
 there exists a stationary distribution $\pi$, such that
 $\pi_x>0$ for each $x\in X$.
Also, the random walk search starts from the stationary distribution
 as described in Algorithm~\ref{alg:random-walk}.

We first suppose that a complete graph of $2n$ vertices,
 denoted by $K_{2n}$,
 is input to the quantum walk algorithm.
Let $\{M_i\}_{i=1}^{\Phi(n)}$ be the collection of all
 distinct perfect matchings on $K_{2n}$, where $\Phi(n)$ is
 the number of such perfect matchings.
For each $1\leq i\leq\Phi(n)$, we define $Y_i$ as
 the set of states in $X$ associated with $M_i$, i.e.,
 $Y_i=\{x\in X\mid M_i\subseteq\xi(x)\}$, where $\xi(x)$ is the set of edges
 of $K_{2n}$ that are queried to compute $\chi(x)$ after the Setup step
 if the initial state is $x$.
The computation of $\chi(x)$ is limited to perform at most
 $cn^{2-\epsilon}$ queries for any $x \in X$;
 therefore, each set $\xi(x)$ must contain at most $cn^{2-\epsilon}$
 distinct edges.
Without loss of generality, we assume that
 $Y_1$ has the minimum stationary probability to be hit, denoted by $\pimin$,
 over all
 $\{Y_i\}_{i=1}^{\Phi(n)}$,
 i.e.,
 $\pimin=\min_{1\leq i\leq\Phi(n)}\sum_{x\in Y_i}\pi_x=\sum_{x\in Y_1}\pi_x$.
We denote by $H$ the hitting time of $Y_1$.
The expected value $\tau$ of $H$ can be formulated as
\[
\begin{split}
\tau &=E[H]=\sum_{i=1}^{\infty}i\Pr\{H=i\}
=\sum_{i=1}^{\infty}\sum_{j=1}^i\Pr\{H=i\}\\
&=\sum_{j=1}^{\infty}\sum_{i=j}^{\infty}\Pr\{H=i\}
=\sum_{j=1}^{\infty}\left[1-\sum_{i=0}^{j-1}\Pr\{H=i\}\right].
\end{split}
\]
The probability $\Pr\{H=i\}$
 of hitting a state in $\Ymin$ at time $i$ for the first time
 is
 at most the probability of hitting a state in $\Ymin$ at time $i$
 (not necessarily for the first time), which is equal to
 $\sum_{x\in Y_1}(\pi P^i)_x=\sum_{x\in Y_1}\pi_x=\pimin$.
Moreover, $\sum_{i=0}^{j-1}\Pr\{H=i\}\leq 1$ obviously.
Therefore, we have
\begin{equation}
\begin{aligned}[b]
\label{eq:tau}
\tau
&\geq\sum_{j=1}^{\infty}\left[1-\min\left\{j\pimin,1\right\}\right]
\geq\sum_{j=1}^{\lfloor\pimin^{-1}\rfloor}\left[1-j\pimin\right]\\
&=\lfloor\pimin^{-1}\rfloor
 -\frac{\lfloor\pimin^{-1}\rfloor(\lfloor\pimin^{-1}\rfloor+1)}{2}\cdot\pimin\\
&\geq\frac{\lfloor\pimin^{-1}\rfloor-1}{2}.
\end{aligned}
\end{equation}

We upper bound the probability $\pimin$.
Let $\Psi(n)$ be the maximum number of perfect matchings associated with
 a state, i.e.,
 $\Psi(n)=\max_{x\in X}\sum_{i=1}^{\Phi(n)}\chi_{Y_i}(x)$, where
 $\chi_{Y_i}(x)$ is the indicator function that returns $1$ if
 $x\in Y_i$, $0$ otherwise.
The sum of probabilities of hitting a state in $Y_i$, over all $i$,
 is upper bounded as
\[
\sum_{i=1}^{\Phi(n)}\sum_{x\in Y_i}\pi_x
=\sum_{x\in X}\sum_{i=1}^{\Phi(n)}\chi_{Y_i}(x)\pi_x
\leq\sum_{x\in X}\Psi(n)\pi_x
=\Psi(n).
\]
Therefore, we have
\begin{equation}
\label{eq:pimin}
\pimin
=\min_{1\leq i\leq\Phi(n)}\sum_{x\in Y_i}\pi_x
\leq\frac{1}{\Phi(n)}\sum_{i=1}^{\Phi(n)}\sum_{x\in Y_i}\pi_x
\leq\frac{\Psi(n)}{\Phi(n)}.
\end{equation}
The numbers $\Phi(n)$ and $\Psi(n)$ are estimated as
 $\Phi(n)=(2n)!/(2^nn!)$
 by Lemma~\ref{lemma:perfect-matching-count}, and
 $\Psi(n)\leq (c(2n)^{2-\epsilon})^n/n!$
 by Lemma~\ref{lemma:subgraph-perfect-matching-count} and
 $|\xi(x)|\leq c(2n)^{2-\epsilon}$.
Therefore, it follows from (\ref{eq:tau}) and (\ref{eq:pimin}) that
\[
\begin{split}
\tau
&=\Omega\left(\pimin^{-1}\right)
=\Omega\left(\frac{\Phi(n)}{\Psi(n)}\right)
=\Omega\left(\frac{(2n)!/(2^nn!)}{(c(2n)^{2-\epsilon})^n/n!}\right)\\
&=\Omega\left(\frac{(2n)!}{(2^{3-\epsilon}cn^{2-\epsilon})^n}\right)\\
&=\Omega\left(\frac{\sqrt{n}(2n/e)^{2n}}{(2^{3-\epsilon}cn^{2-\epsilon})^n}\right)
 \qquad\text{\small (Stirling's approximation)}\\
&=\Omega\left(\sqrt{n}\left(\frac{n^\epsilon}{2^{1-\epsilon}ce^2}\right)^n\right),
\end{split}
\]
 which is larger than any polynomial of the number $2n$
 of vertices of $K_{2n}$.

Now we suppose a graph $G$ having $2n$ vertices,
 $n^2$ edges, and a unique perfect matching $M_1$
 is input to the quantum walk algorithm.
Such a graph $G$ can be obtained from $M_1$ by applying
 Corollary~1.6 in \cite{Lovasz1972}.
Since the underlying Markov chain is independent of the edges of
 an input graph,
 the expected hitting time $\tau$
 of $\Ymin$ is the same in $G$ as in $K_{2n}$.
The expected run time for $G$ with a quadratic quantum speed up,
 order of $\sqrt\tau$, is still larger than any polynomial of the number
 of vertices of $G$.
\end{proof}

\section{Limitation of Quantum Walk Approach for Bipartite Graphs}
\label{sc:limitation_bipartite}
In the last part of the proof of Theorem~\ref{th:limitation_general},
 we utilize the existence of a graph $G$ that has $2n$ vertices,
 $n^2$ edges, and a unique perfect matching $M_1$.
We can construct such a graph that is bipartite as follows.

\begin{lemma}
\label{lm:bipartite}
For any perfect matching $M_1$ on $2n$ vertices,
 there exists a bipartite graph $G$ having
 $2n$ vertices, $\Theta(n^2)$ edges, and a unique perfect matching $M_1$.
\end{lemma}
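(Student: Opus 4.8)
For any perfect matching $M_1$ on $2n$ vertices, there exists a bipartite graph $G$ on $2n$ vertices with $\Theta(n^2)$ edges whose unique perfect matching is $M_1$.

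Let me think about how to construct such a bipartite graph.

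A perfect matching $M_1$ on $2n$ vertices consists of $n$ edges that pair up all $2n$ vertices. I need to build a bipartite graph containing $M_1$ as a subgraph, with $\Theta(n^2)$ edges, where $M_1$ is the ONLY perfect matching.

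First, since $M_1$ itself is a set of $n$ disjoint edges, I can 2-color the vertices consistently: for each edge $\{u_i, v_i\}$ in $M_1$, put $u_i$ on the left side $L$ and $v_i$ on the right side $R$. This gives a bipartition with $|L| = |R| = n$, and $M_1$ is a perfect matching between $L$ and $R$. Relabel so that $L = \{u_1, \dots, u_n\}$, $R = \{v_1, \dots, v_n\}$, and $M_1 = \{\{u_i, v_i\} : 1 \le i \le n\}$.

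Now I need to add edges to reach $\Theta(n^2)$ total edges while keeping $M_1$ the unique perfect matching. The key idea: a bipartite graph with left vertices $u_1, \dots, u_n$ and right vertices $v_1, \dots, v_n$ has perfect matchings corresponding to permutations $\sigma$ where $u_i$ is matched to $v_{\sigma(i)}$, requiring edge $\{u_i, v_{\sigma(i)}\}$ to exist. The matching $M_1$ corresponds to the identity permutation. I want the identity to be the only valid permutation.

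A classic construction: make the adjacency an upper (or lower) triangular structure. Let me connect $u_i$ to $v_j$ if and only if $i \le j$. Then the edge set includes all pairs with $i \le j$. The number of edges is $\binom{n}{2} + n = \binom{n+1}{2} = \Theta(n^2)$. This contains $M_1$ (the diagonal $i = j$). Now, does this have a unique perfect matching? A perfect matching is a permutation $\sigma$ with $i \le \sigma(i)$ for all $i$. But a permutation with $\sigma(i) \ge i$ for all $i$ must be the identity: $\sigma(n) = n$ is forced (only option $\ge n$), then $\sigma(n-1) \in \{n-1, n\}$ but $n$ is taken, forcing $\sigma(n-1) = n-1$, and inductively $\sigma = \text{id}$. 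So the unique perfect matching is the diagonal, which is $M_1$.

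=== PROOF PROPOSAL ===

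\begin{proof}[Proof proposal]
The plan is to exploit the structure of $M_1$ to fix a bipartition, and then complete $G$ to a triangular bipartite graph so that a simple greedy/inductive argument forces uniqueness of the perfect matching.

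First I would orient the given matching to obtain a bipartition. Since $M_1$ is a perfect matching on $2n$ vertices, it consists of $n$ pairwise disjoint edges covering every vertex. For each edge of $M_1$ I place one endpoint on the left and the other on the right, and after relabeling I may write the left vertices as $u_1,\dots,u_n$, the right vertices as $v_1,\dots,v_n$, and $M_1=\{\{u_i,v_i\}\mid 1\le i\le n\}$. Thus $M_1$ is already a perfect matching of the bipartite graph $(L,R)$ with $L=\{u_i\}$ and $R=\{v_j\}$, and any $G$ I build on this bipartition will be bipartite by construction.

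Next I would add edges to reach $\Theta(n^2)$ while enforcing uniqueness. I define $G$ by making $u_i$ adjacent to $v_j$ exactly when $i\le j$. This edge set contains the diagonal $\{u_i,v_i\}$, hence contains $M_1$, and its cardinality is $\binom{n}{2}+n=\binom{n+1}{2}=\Theta(n^2)$, as required. Any perfect matching of a bipartite graph on $(L,R)$ corresponds to a permutation $\sigma$ of $\{1,\dots,n\}$ matching $u_i$ to $v_{\sigma(i)}$, and it is a valid matching of $G$ precisely when $i\le\sigma(i)$ for all $i$.

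Finally I would verify that the only such permutation is the identity, so that $M_1$ is the unique perfect matching. The constraint $\sigma(i)\ge i$ for every $i$ forces $\sigma(n)=n$, since $n$ is the only value in $\{1,\dots,n\}$ that is $\ge n$; deleting $n$ and arguing inductively forces $\sigma(n-1)=n-1$, and so on down to $\sigma(1)=1$. Hence $\sigma$ is the identity and $M_1$ is the unique perfect matching. I expect no serious obstacle here: the main point to get right is that the chosen bipartition is consistent with $M_1$ (which the orientation step guarantees) and that the triangular adjacency simultaneously yields $\Theta(n^2)$ edges and a forced matching; both follow from elementary counting and the inductive uniqueness argument above.
\end{proof}
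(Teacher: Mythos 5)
Your proposal is correct and follows essentially the same route as the paper: both fix the bipartition $u_1,\dots,u_n$ versus $v_1,\dots,v_n$ with $M_1=\{(u_i,v_i)\}$, take the triangular graph joining $u_i$ to $v_j$ exactly when $i\le j$ (giving $\binom{n+1}{2}=\Theta(n^2)$ edges), and force uniqueness by the same induction starting from the degree-one vertex $u_n$. Your phrasing of the uniqueness step in terms of permutations with $\sigma(i)\ge i$ is just a relabeling of the paper's edge-elimination argument.
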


\begin{proof}
Suppose that we have two collections
 of $n$ vertices $u_1,\ldots,u_n$ and $v_1,\ldots,v_n$.
We may assume without loss of generality
 that $M_1$ is the edge set $\{(u_i,v_i)\mid 1\leq i\leq n\}$.
We define that $G$ is the bipartite graph obtained by joining
 $u_i$ and $v_j$ for each $1\leq i\leq n$ and $i\leq j\leq n$.
The graph $G$ has $\sum_{i=1}^n(n-i+1)=\Theta(n^2)$ edges.

If $M$ is any perfect matching of $G$, then
 because $u_n$ is adjacent only to $v_n$, 
 the edge $(u_n,v_n)$ is contained in $M$.
This means that all other edges incident to $v_n$ are not contained in $M$.
Since $(u_{n-1},v_n)$ is not contained in $M$,
 $M$ contains the only remaining edge $(u_{n-1},v_{n-1})$
 incident to $u_{n-1}$.
This means that all other edges incident to $v_{n-1}$ are not contained in $M$,
 and that $M$ contains the only remaining edge $(u_{n-2},v_{n-2})$
 incident to $u_{n-2}$.
Iterating this argument, we conclude that
 $G$ has a unique perfect matching $M=\{(u_i,v_i)\mid 1\leq i\leq n\}=M_1$.
\end{proof}

Replacing the graph $G$ used in the proof of Theorem~\ref{th:limitation_general}
 with the bipartite graph of Lemma~\ref{lm:bipartite},
 we have the following theorem.

\begin{theorem}
\label{th:limitation_bipartite}
Suppose that a quantum walk algorithm for the \MM problem
 is transformed from
 a random walk search in Algorithm~\ref{alg:random-walk}, and that
 the underlying Markov chain is independent of the edges of an input graph.
If the query complexity of the quantum walk algorithm is $cn^{2-\epsilon}$
 in both the adjacency matrix and list models,
 where $n$ is the number of the vertices, and
 $c>0$ and $\epsilon>0$ are any constants,
 then
 there exists a bipartite input graph with $\Theta(n^2)$ edges
 such that the algorithm
 needs an expected run time larger than any polynomial of $n$.
\end{theorem}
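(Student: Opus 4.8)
The plan is to reuse the proof of Theorem~\ref{th:limitation_general} almost verbatim, altering only its final graph-construction step. First I would again input the complete graph $K_{2n}$ and run the identical hitting-time analysis: among the $\Phi(n)=(2n)!/(2^nn!)$ perfect matchings of $K_{2n}$, select the matching $M_1$ whose state set $Y_1=\{x\in X\mid M_1\subseteq\xi(x)\}$ has the minimum stationary probability to be hit, $\pimin$. The averaging bound $\pimin\leq\Psi(n)/\Phi(n)$ with $\Psi(n)\leq(c(2n)^{2-\epsilon})^n/n!$, combined with~(\ref{eq:tau}), reproduces the same superpolynomial estimate $\tau=\Omega(\pimin^{-1})$ for the expected hitting time of $Y_1$. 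This entire portion is indifferent to whether the graph ultimately fed to the algorithm is complete, general, or bipartite, so it carries over unchanged.

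The single modification comes at the end: in place of Corollary~1.6 of~\cite{Lovasz1972}, I would invoke Lemma~\ref{lm:bipartite} to realize the \emph{same} matching $M_1$ as the unique perfect matching of a bipartite graph $G$ on $2n$ vertices with $\Theta(n^2)$ edges. This substitution is legitimate because $M_1$ is simply a set of $n$ pairwise disjoint edges and Lemma~\ref{lm:bipartite} applies to any such matching; indeed, one may relabel so that $M_1=\{(u_i,v_i)\mid 1\leq i\leq n\}$ is the diagonal matching of the bipartition used there. Since the state set $X$, the transition matrix $P$, and hence $\pi$ are identical for $K_{2n}$ and $G$, the expected hitting time of the fixed set $Y_1\subseteq X$ is a quantity of the Markov chain alone and is unchanged for input $G$. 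Because $M_1$ is the unique perfect matching of $G$, any correct algorithm must reach a state of $Y_1$ to certify it, so the classical hitting time is still $\tau$ and the quantum quadratic speedup leaves an expected run time of order $\sqrt\tau$, which exceeds any polynomial in the number $2n$ of vertices of $G$.

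Given Lemma~\ref{lm:bipartite}, the statement is thus essentially a corollary of Theorem~\ref{th:limitation_general}, and I expect no genuinely new difficulty. The one point deserving care, inherited directly from Theorem~\ref{th:limitation_general}, is justifying that the hitting-time lower bound survives the replacement of $K_{2n}$ by $G$ even though the adaptive Check step may query different edge sets on the two inputs. This is handled exactly as before: the Setup and Transition steps reveal no edges because the chain is edge-independent, so $\chi(x)$ can depend only on the edges queried inside the Check step, and certifying the unique perfect matching $M_1$ of $G$ forces Check to confirm all $n$ of its edges; hence reaching a marked state on input $G$ coincides with hitting $Y_1$, whose hitting time is governed solely by $(P,\pi)$.
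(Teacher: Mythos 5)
Your proposal is correct and follows essentially the same route as the paper: the paper's proof of Theorem~\ref{th:limitation_bipartite} consists precisely of rerunning the argument of Theorem~\ref{th:limitation_general} and substituting the bipartite graph of Lemma~\ref{lm:bipartite} for the graph obtained from Corollary~1.6 of \cite{Lovasz1972}. Your additional remarks on why the hitting time of the fixed set $Y_1$ is unchanged under this substitution match the reasoning already present in the proof of Theorem~\ref{th:limitation_general}.
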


\section{Conclusion}
\label{sc:conclusion}

In this work we considered the use of the quantum walk technique to the
 construction of quantum algorithms for the \MM problem.
We showed that
 the simple use of
 this technique fails in producing a fast algorithm for the \MM problem
 achieving $O(n^{2-\epsilon})$ query complexity,
 even on bipartite graphs.
The problem of finding an algorithm for the \MM problem
 improving the known upper bound $O(n^{7/4})$ on the query complexity,
 or finding a better lower bound $\omega(n^{3/2})$ is still open.
An improved algorithm
 appears to rely on other techniques for constructing quantum algorithms.

\printbibliography
\end{document}